\def\ratetwo#1#2{\mathbf{q}(#1,#2)}
\def\ratethree#1#2#3{\mathbf{q}_{#3}(#1,#2)}
\def\rate#1#2{\@ifnextchar{\bgroup}{\ratethree{#1}{#2}}{\ratetwo{#1}{#2}}}
\def\rateactthree#1#2#3{\mathbf{q}(#1,#3,#2)}
\def\rateactfour#1#2#3#4{\mathbf{q}_{#4}(#1,#3,#2)}
\def\rateact#1#2#3{\@ifnextchar{\bgroup}{\rateactfour{#1}{#2}{#3}}{\rateactthree{#1}{#2}{#3}}}
\def \Nil{0}
\newcommand{ \Idf}{\mathsf D}
\newcommand{\Proctwo}{\mbox{$\mathcal L$}} 
\newcommand{\stsp}[1]{{\sf S_{#1}}}
\newcommand{\Proc}{\mbox{$\mathcal P$}} 
\newcommand{\SetDerivatives}{\mbox{$\mathcal S$}} 
\newcommand{\Expr}{\mathsf E}
\newcommand{\newExpr}[1]{{\mathsf #1}}
\newcommand{\Aut}{\mathsf M}
\newcommand{\IAut}{\mathsf C}
\newcommand{\old}[1]{}
\newcommand{\CTMC}{{\sf CTMC}}
\newcommand{\statespace}{\mbox{\sl S}}
\newcommand{\Real}{{\sf I\kern-0.14emR}} 
\newcommand{\generator}{\mathbf{Q}}
\newcommand{\clos}[3]{#1_{[ #2 \leftarrow #3]}} 
\newcommand{\proc}[2]{{\mathsf P_{#1,#2}}}
\newcommand{\newproc}[3]{{\mathsf P^{#3}_{#1#2}}}
\newcommand{\myset}[1]{ \{ #1\}}
\newcommand{\transition}[3]{#1 \xrightarrow{#3} #2}
 \def \RuleusingThree[#1]#2#3{\prooftree #2\justifies#3 \using{(#1)}
 \endprooftree}
 \def \Ruleusing[#1]#2{\@ifnextchar\bgroup {\RuleusingThree[#1]{#2}}
         {\prooftree \justifies #2 \using{(#1)} \endprooftree}}
 \def \RulenotusingTwo#1#2{\prooftree #1 \justifies #2 \endprooftree}
 \def \Rulenotusing#1{\@ifnextchar\bgroup {\RulenotusingTwo{#1}}
         {\prooftree \justifies #1 \endprooftree}}
 \def \Inf{\proofrulebaseline=2.2ex
         \abovedisplayskip12pt\belowdisplayskip12pt
         \abovedisplayshortskip8pt\belowdisplayshortskip8pt
         \@ifnextchar[{\Ruleusing}{\Rulenotusing}}
\newcommand{\uniqueLabels}[1]{{\mathcal U}(#1)}
\newcommand{\Defeq}{\stackrel{\mathit{df}}{=}}
\newcommand{\evolves}[1]{\stackrel{\mbox{\tiny{$\scriptscriptstyle{#1}$}}}{\lra}}
\newcommand{\lra}{\longrightarrow}
\newcommand{\Var}{\mathrm{Var}}
\newcommand{\MA}{\Aut}
\newcommand{\set}[1]{\mathcal{#1}}
\newcommand{\coop}[1]{\bigotimes_{#1}}
\newcommand{\pasact}{\mathcal{ P}}
\newcommand{\actact}{\mathcal{ A}}
\newcommand{\actypepas}[1]{\mathcal{P}(#1)}
\newcommand{\actypeact}[1]{\mathcal{A}(#1)}
\newcommand{\Act}{\mbox{\it Act}}
\newcommand{\ruleOne}[2]{\arraycolsep=14pt\renewcommand{\arraystretch}{1.75}%
\begin{array}[c]{c}#1\\
        \hline\multicolumn{1}{c}{#2}
\end{array}}
\begin{document}

\title{Operational semantics for product-form solution}
\author{Maria Grazia Vigliotti}
\institute{Department of Computing, Imperial College London,\\ 180 Queen's Gate,  London SW7 2BZ, UK \\
\email{maria.vigliotti@imperial.ac.uk}
}

\maketitle
\thispagestyle{empty}

\begin{abstract}
In  this paper we present product-form solutions from the point of view of stochastic process algebra. In previous work \cite{Marin-Vigliotti10bis} we have shown how to  derive 
product-form solutions for a formalism called Labelled Markov Automata (LMA). LMA are very useful as their relation  with  the Continuous Time Markov Chains is very direct. The disadvantage  of using LMA is that the proofs of properties  are  cumbersome. In fact,  in LMA it is not possible  to use the inductive structure of the language in a proof.   In this paper we consider a simple stochastic process algebra that has the great advantage of  simplifying  the  proofs. This simple language has been  inspired by PEPA \cite{Hillston94}, however, 
  detailed analysis of  the semantics of  cooperation  will  show the differences between  the two formalisms. It will  also  be shown that  the semantics of the cooperation in process algebra influences the correctness of the derivation of the product-form solutions. 
\end{abstract}

\section{Introduction}

In  this paper we present product-form solutions from  the point of view of stochastic process algebra. The main  motivation for this work is twofold: on  one side, our formalisation clarifies the  basic mechanisms that govern  product-form solutions in Continuous Time Markov Chains (\CTMC s), on the other side, we can generalise the  notion  of product-form solutions beyond  queuing theory. Product-form solutions are efficient solutions for  stationary distributions in {\CTMC}s in general,  while so far product-form solutions have been  studied mostly in the area of performance evaluation/ queuing theory.
The work presented here is an extension of  previous  work \cite{Marin-Vigliotti10bis} where  we have shown how to  derive 
product-form solutions for a formalism called Labelled Markov Automata (LMA).  In very simple terms, LMA  describe  the  state space of  {\CTMC}s as labelled graph  decorated with transition  rates.  LMA are equipped with a basic mechanism  to build  complex {\CTMC}s.  LMA have proved very useful in helping to  understand basic  mechanisms that govern  product-form solutions, and in providing a very elegant  proof of the theorem  GRCAT \cite{Marin-Vigliotti10bis}. However,  the disadvantage  in using LMA is that the proofs,  even for  simple properties,  are  cumbersome.  In LMA  it is not possible  to use the inductive structure of the language in a proof.  

 In this paper we  improve on  previous work \cite{Marin-Vigliotti10bis} by considering   a simple stochastic process algebra  (SSPA), which  preserves the  semantics of cooperation of LMA. 
   This simple language has been  inspired by PEPA \cite{Hillston94}, however 
  detailed analysis of  the semantics of  cooperation  will  show the differences between  the two formalisms.

In this paper we  investigate the general principle that determines product-form solutions in  {\CTMC}s, and  we   show that  the semantics of cooperation is  crucial  for  the correct derivation  of product-form  solutions. 
We will  introduce a simple language equipped with a rather unique, and possibly  counterintuitive semantics, which  guarantees the existence of product-form solutions.  
We will argue that  the semantics presented here,  is precisely what is needed to  model rigorously  product-form solutions for {\CTMC}s.
We shall also  consider a biological  example to show an  interesting application of product-form solutions to a context different from queuing theory.
 
 \section{Related work}
 There is vast literature  on the topic of product-form solutions and process algebra, and on the formalisation of  the intrinsic mechanisms  that determine product-form solution 
 \cite{sereno:pfpepa,hillston:product,harrison:exploiting,balbo:pfgspn,harrison:rcat,coleman:product.form,balbo:pfgspn,balbo:relationsbcmp-pfspn}.  On the relationship  between  process algebra and product-form solutions, Hillston, Thomas and Clark,   played a major role \cite{Hillston-rep-pf98,hillston:product,harrison:exploiting,prodformsoln,Clark2002,sereno:pfpepa}. The common denominator in  these papers is the use of PEPA 
 to model  processes that are known  to enjoy  product-solution, and to extract, via PEPA,  the modular properties of such  processes.   This body  of  work  has demonstrated  to the community  the modelling  power  of PEPA. It was shown   in  \cite{harrison:exploiting} that quasi-reversible structures can be  modelled in PEPA, together with a large variety of product-from solutions. We differ  from the work carried out in PEPA, as  our goal  is not to define  'yet another stochastic process algebra' to  model  product-form solutions, but to  design  a language and a semantics  to  describe only the {\CTMC}s that enjoy product-form solutions.  With  our formalism it is relatively easy  to find new  product-form solutions for {\CTMC}s.  This was not achieved in  previous work. 
 
 Another formalism that has been extensively used   is the Generalised Stochastic Petri Nets (GSPN) \cite{balbo:pfgspn,coleman:product.form,balbo:pfgspn,balbo:relationsbcmp-pfspn}- to cite a few articles. The emphasis is to understand which GSPN enjoy  product-form solutions, and  what  conditions on GSPN are necessary to   yield product-form solution \cite{coleman:product.form}.  We  differ from the  work on the GSPN as we use process algebra, and also  because of  the generality  of our results.   In  this paper, and  in previous work,   the effort has been directed in  defining a set of  sufficient  conditions that guarantee product-form solutions for  time-homogenous {\CTMC}s.
Finally, it must be mentioned that similar efforts have been carried out by the community in  performance \cite{Kelly'79,Chao-Masa-Pinedo,Robertazzi}.

 The class of product-form solutions considered by \cite{Robertazzi} is rather limited, while a  true advancement was made by  \cite{Kelly'79,Chao-Masa-Pinedo} with the notions of {\em quasi-reversibility}. Quasi-reversibility was introduced by Kelly \cite{Kelly'79}, and used only  on the context of queuing networks.  In  \cite{Chao-Masa-Pinedo},   great efforts were successfully made  to show generality  and robustness  of quasi-reversibility.   Nearly all  product-form solutions  known in queueing networks are  derived using  quasi-reversibility.  It  was proved in  \cite{Chao-Masa-Pinedo} that   quasi-reversibility is  a sufficient  condition for product-form solutions.  The conditions of Theorem \ref{grcat} can  be seen as  formalisation  of quasi-reversibility. The main  difference between  Theorem \ref{grcat}  and quasi-reversibility lies in the formalisation of the  `connection' of {\CTMC}s. The way  in which queues are connected together is expressed in  natural language \cite{Chao-Masa-Pinedo}.   This is the main weakness of the   work.  Since it is not clear how to `connect' {\CTMC}s together, only networks of queues are considered. Understanding of how to connect queues together   is clear  in the community  of performance evaluation. Our work goes  further as it specifies,  in a rigorous way, the `connection' or, better, the cooperation among {\CTMC}s. 
  Note that we have {\em only} sufficient  conditions for  product-form solutions, not necessary conditions. As consequence, there are  product-form solutions that we cannot characterise, for example \cite{boucherie:characterisation}. We leave for future work formal  development to deal  with  such  product-form solutions.
 
\section{A simple language}\label{LMA}
In this section  we introduce a  simple stochastic process algebra, SSPA.  The main  motivation  to  introduce such a formalism is to verify that the conditions for  product-form solutions can  be modelled in a language.  SSPA is defined in a rather unusual way, but follows in the spirit ideas that were discussed in \cite{Hillston-rep-pf98,hillston:product}. We initially define  simple processes. These are composed essentially  by choice and by recursion.
 Similarly to PEPA, simple processes may  or may not characterise a {\CTMC}.  Some simple process are {\em  incomplete}, according to PEPA terminology,  in the sense that some transitions lack the information about the rate. Such  information  can be inserted via  a  new operator: the  {\em closure}. 
A second layer of processes is  defined, as cooperation of simple processes.  The operator for cooperation has been  inspired by PEPA, but differently form PEPA is an n-nary -operator, like choice. Such operator, differently form  PEPA-bow, cannot be expressed as multiple composition of the binary association. 

To formally define the language we assume the existence of a set of variables $\Var $, and a set of actions $\Act$ on which   the letter $a,b,c \ldots$ range over it.
 \begin{definition}[Simple processes]\label{def:sp}
 The set of  {\em simple process}, $\Proc$, is is given by the following syntax:
\[\begin{array}{rcl}
\Expr&::=& \Nil ~\mid~  \Idf  \\ 
\Aut &::=  &\sum_{i \in I}(a_i,r_i). \Expr _i ~\mid~  \clos{\Aut}{a}{\lambda}    
\end{array}\]
where $I$ is a finite set of indexes.

  \end{definition}

For clarity  in the notation we    use   the  greek letters $\lambda,  \mu, \ldots $ range over the  set of positive real numbers $\Real^+$, the letters $x,y,x, \ldots $  range over $\Var $ and  the letter $r$ ranges over $\Real^+ \cup \Var $.
When writing a variable in processes, we use a subscript that refers to the label. For example we would  write $(a,y_a). \Expr $ but not 
$(a,y_b). \Expr $. 
A simple processes  stand for  the building blocks which are ultimately used to  compose complex {\CTMC}s. 
Nil, written $\Nil$,  is the empty  process; $\Idf$ is the symbol for  the  identifier. Identifiers are equipped with   {\em  identifier  equations} such as  $\Idf \Defeq \Aut$. The {\em  choice}, $\sum_{i \in I}(a,r). \Expr _i$,  represents the standard selection of one of the possible transitions, and finally there is a new operator {\em closure} $\clos{\Aut}{a}{\lambda} $.   This operator  transforms all transitions  labelled with the pair  $a$  and   a variable  into  transitions labelled with the pair  $a$ and the real number $\lambda$.
 The role of closure will become clear in the later development of the paper.
The grammar of simple process  aims to define the transition  graph of a labelled {\CTMC}, but not all  transition  graphs derived from  this grammar are {\CTMC}s due to the presence of variables. Examples of this kind can  be seen  in Fig. \ref{pas-gif}.  

\begin{figure}[h!]
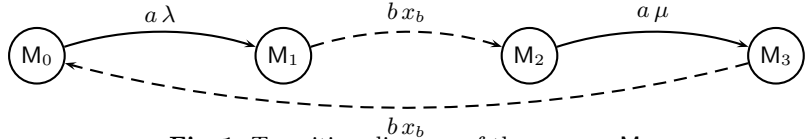

\centering
$\psmatrix[mnode=circle,colsep=2.5]  \MA_0& \MA_1 &\MA_2  &\MA_3 \\
\endpsmatrix
\psset{shortput=nab,arrows=->,labelsep=3pt} \small
 \ncarc[arcangle=18]{1,1}{1,2}^{a\, \lambda}
\ncarc[arcangle=18,linestyle=dashed]{1,2}{1,3}^{b \, x_b}
\ncarc[arcangle=18]{1,3}{1,4}^{a\,  \mu }
\ncarc[arcangle=15,linestyle=dashed]{1,4}{1,1}^{b \, x_b} 
$
\vspace*{0.3cm}
\caption{Transition  diagram  of the process $\MA_0$}
\label{pas-gif}
\end{figure}
\begin{figure}[h]
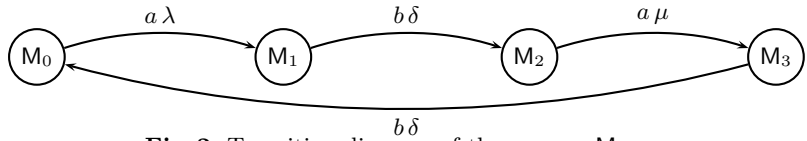

\centering
$\psmatrix[mnode=circle,colsep=2.5]  \MA_0& \MA_1&\MA_2&\MA_3 \\
\endpsmatrix
\psset{shortput=nab,arrows=->,labelsep=3pt} \small
 \ncarc[arcangle=18]{1,1}{1,2}^{a\, \lambda}
\ncarc[arcangle=18]{1,2}{1,3}^{b\, \delta}
\ncarc[arcangle=18]{1,3}{1,4}^{a\,  \mu }
\ncarc[arcangle=15]{1,4}{1,1}^{b \, \delta} 
$
\vspace*{0.3cm}
\caption{Transition  diagram  of the process $\MA_{0[b\leftarrow \delta ]}$} 
\label{act-gif}
\end{figure}

 Informally,  we can say that the closure operator would transform a transition  graph  of the simple process $\MA_0\Defeq (a,\lambda).(b,x_b). (a,\mu).(b,x_b).\MA_0$ as in Fig.  \ref{pas-gif} as one in  Fig.\ref{act-gif} for the simple process $\MA_{0[b\leftarrow \delta ]}$.

If all transitions of a process are  decorated with a  real number, as in  in Fig.  \ref{act-gif}, then  the underlying model  description  is a time-homogenous {\CTMC}, similar to PEPA.  To formally define how to derive the {\CTMC} of a given simple process, we need to give a formal semantics to SSPA via {\em labelled transition system}.
\begin{definition} A   labelled transition system for simple processes  written $ \rightarrow:\Proc \times (\Act
\times \Real^+\cup \Var)
\times \Proc$ 
is
 the smallest multi-relation   that satisfies the rules in Table \ref{simple-proc-sem}.
 
 We write $ \Aut \evolves{a,r} \Aut' $  if $(\Aut,(a,r),\Aut') \in \rightarrow$,  and  $\rightarrow^*$ for the  transitive closure of $\rightarrow$.

\end{definition}
In what follows, we consider a  relation that is generally defined in  $\pi$-calculus \cite{Milner'99} {\em structural congruence}. Structural congruence is a relation  preserved by all operators of the calculus,  i.e a congruence, that identifies terms that should not be distinguished for semantical reasons. Hillston  \cite{Hillston94}  defined a similar relation in a operational  way  as  {\em  isomorphism}. 
\begin{definition} Structural  congruence, written $\equiv$ over the set of simple processes  $\Proc$ is the  smallest congruence that allows the reorder of terms in the choice.
\end{definition}
We use structural congruence as a relation to talk about individual terms  in the summation, and to to avoid to  be bothered by the order of  terms in the summation.
For  example $(a,\lambda).\Expr_1 +(b,\mu).\Expr_2 \equiv (b,\mu).\Expr_2+(a,\lambda).\Expr_1   $. We use $S$ to indicate a set  of terms of the summation that  we do not wish to identfy i.e. 
$(a,\lambda).\Expr_1 +(b,\mu).\Expr_2+ (c,\gamma).\Expr_3 \equiv (a,\lambda).\Expr_1 +S  $.

  Differently from \CTMC, not all transitions in SSPA  have a real number attached. These are called {\em passive transitions}. Passive transitions are transitions whose delay has not yet been specified. 
  The difference between  passive and active transitions, is crucial  in this work, so we proceed now to define such  entities via analysis of the labels in a process.
  \begin{definition}
A label   $a \in \Act$ is called  {\em  active }  with respect to  a simple process $\MA$ if $\MA \equiv (a,\lambda).\MA' +S$.
A label  $a \in \Act$ is called  {\em  passive}  with respect to a simple process  $\MA$ if  $\MA \equiv (a,x_a).\MA'+S$.
\end{definition}
We now define a the set of labels  that are active or passive in any possible evolution of the simple process.
\begin{definition}
The  set of {\em active labels},written $\actypeact{\MA}$,  is recursively defined as follows:
\begin{description}
\item[(Nil)] $\actypeact{\Nil}=\emptyset$;
\item[(Def)]$\actypeact{\Idf}=\actypeact{\MA}$\quad  if $\Idf\Defeq \MA$;
\item[(Choice)] $\actypeact{\sum_{i \in I}(a_i,r_i). \Expr_i}=\cup_{i\in I} \myset{a_i : (a_i,r_i).E_i, \, r_i \in \Real^+} \cup \actypeact{\Expr_i}$;
\item[(Closure)] $\actypeact{\clos{\Aut}{a}{\lambda}}= \actypeact{\MA}\cup \myset{a} $.
\end{description}
\end{definition}
\begin{definition}
The  set of {\em passive labels},written $\actypepas{\MA}$,  is recursively defined as follows:
\begin{description}
\item[(Nil)] $\actypepas{\Nil}=\emptyset$;
\item[(Def)]$\actypepas{\Idf}=\actypepas{\MA}$\quad  if $\Idf\Defeq \MA$;
\item[(Choice)] $\actypepas{\sum_{i \in I}(a_i,r_i). \Expr_i}=\cup_{i\in I} \myset{a_i: (a_i,r_i).E_i,\, r_i \in Var} \cup \actypepas{\Expr_i}$;
\item[(Closure)] $\actypepas{\clos{\Aut}{a}{\lambda}}= \actypepas{\MA}\backslash \myset{a} $.
\end{description}
\end{definition}
We simply  write $\set P$ and $\set A$ for the set of  passive and active labels when  it is clear from the context which simple process we are referring to.

\begin{definition} \label{close-process}
A simple process  $\Aut$  is {\em closed} if $\actypeact{\MA} =\emptyset$, it is open otherwise. 
 \end{definition}

The closure operator  transforms each open  automata into a closed one. 
We now present a few properties of the closure operator, with  respect to the semantics equivalence of {\em strong bisimilarity}.
\begin{definition}\label{str-bis}
We  define {\em strong bisimilarity} as the largest  symmetrical  relation $\cong$ such that   if $\Aut_1 \cong \Aut_2$, then  if  for all   $\Aut'_1$ it holds that $\Aut_1 \evolves{a,r} \Aut'_1  $ then  there exists $\Aut'_2$  such that  $\Aut_2 \evolves{a,r} \Aut'_2  $ and $\Aut'_1 \cong \Aut'_2$. 
\end{definition}
\begin{proposition}\label{simple-prop}
\begin{enumerate}
\item $\clos{\Aut}{a}{\lambda} \cong \MA $  if $\MA$ is closed. 
\item $\Aut_{[a \leftarrow \lambda][b \leftarrow \mu]}\cong\Aut_{b \leftarrow \mu][a \leftarrow \lambda]}$.
\item  Let $\actypepas{\MA}\cong\myset{a_1, a_2,\ldots a_N} $ be the set of passive actions of $\MA$.
$\MA_[a_1 \leftarrow\lambda_1]\ldots[a_N \leftarrow\lambda_N] $ is closed. 
\end{enumerate}
\end{proposition}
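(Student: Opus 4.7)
My plan is to prove the three items by a combination of structural induction on simple processes and coinductive exhibition of bisimulation relations, relying on the operational rules in Table~\ref{simple-proc-sem}. The key rule I will use is the one for closure: it rewrites every passive transition $\MA \evolves{a,x_a} \MA'$ into $\clos{\MA}{a}{\lambda} \evolves{a,\lambda} \clos{\MA'}{a}{\lambda}$, and otherwise propagates a transition $\MA \evolves{c,s} \MA'$ unchanged (up to enclosing the derivative in the same closure). I will read ``closed'' in item~1 as $\actypepas{\MA}=\emptyset$, the reading consistent with the closure operator turning open processes into closed ones, since otherwise the claim would fail as soon as $\MA$ retained a passive label~$a$.

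For item~1, I take the symmetric closure of
\[
\{(\clos{\MA}{a}{\lambda},\,\MA) : \actypepas{\MA}=\emptyset\}
\]
as the candidate bisimulation. Verifying the transfer property needs an auxiliary lemma: closedness is preserved along transitions, i.e.\ if $\MA \evolves{c,s} \MA'$ and $\actypepas{\MA}=\emptyset$, then $\actypepas{\MA'}=\emptyset$. Given this, the rewriting clause of closure never fires (because $a\notin\actypepas{\MA}$), so the transitions of $\clos{\MA}{a}{\lambda}$ and of $\MA$ are in one-to-one correspondence with identical labels and with derivatives again paired by the same relation. The auxiliary lemma is itself proved by induction on the derivation of the transition, using the clauses in the definition of passive labels for choice and for identifier unfolding.

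For item~2, I exhibit the symmetric closure of
\[
\{(\MA_{[a\leftarrow\lambda][b\leftarrow\mu]},\,\MA_{[b\leftarrow\mu][a\leftarrow\lambda]}) : \MA\in\Proc\}
\]
as the witnessing bisimulation. The transfer property reduces to a case analysis on the label $(c,s)$ of an originating transition $\MA\evolves{c,s}\MA'$: if $c\notin\{a,b\}$ or $s\in\Real^+$, both orderings propagate the transition unchanged; if $(c,s)=(a,x_a)$ or $(c,s)=(b,x_b)$, only one of the two closures acts on it, and it does so identically regardless of the order (one tacitly assumes $a\neq b$). In every case the derivatives remain related.

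Item~3 is a purely syntactic calculation. Using the Closure clause $\actypepas{\clos{\MA}{a}{\lambda}}=\actypepas{\MA}\setminus\{a\}$, an induction on $N$ gives
\[
\actypepas{\MA_{[a_1\leftarrow\lambda_1]\cdots[a_N\leftarrow\lambda_N]}} \;=\; \actypepas{\MA}\setminus\{a_1,\ldots,a_N\} \;=\; \emptyset,
\]
whence the process is closed by Definition~\ref{close-process}. The main obstacle in the whole proposition is the auxiliary invariant behind item~1 --- that the set of passive labels cannot grow along a transition --- because the semantic rules for choice and for identifier unfolding must be analysed with care; once that invariant is in place, items~1--3 all follow by routine case analysis on the rules of Table~\ref{simple-proc-sem}.
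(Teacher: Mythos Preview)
Your proposal is correct and is precisely the natural elaboration of the paper's one-line proof (``by induction on grammar of simple processes''): the structural/rule induction you invoke for the auxiliary preservation lemma and for the case analysis on transitions is exactly what that phrase gestures at, while your explicit bisimulation relations supply the coinductive wrapper that any bisimilarity claim requires. You are also right to read ``closed'' as $\actypepas{\MA}=\emptyset$; Definition~\ref{close-process} in the paper has $\actypeact{\MA}=\emptyset$, which is a typo (with that literal reading item~1 would fail, and item~3 would contradict the Closure clause $\actypeact{\clos{\Aut}{a}{\lambda}}=\actypeact{\MA}\cup\{a\}$).
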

\begin{proof}
By  induction on grammar of simple processes.
\end{proof}
Sometimes, in the presence of multiple applications of the closure operator, we write $\MA_{ {\set  P} \leftarrow R}$, where $R$ is a set of rates $R= \myset{r_1,r_2,\ldots,r_N}$  and $ {\set  P}=\myset{a_1,a_2,\ldots,a_N}$ is the set of passive labels in $\MA$.  Clearly, by generalisation of Proposition \ref{simple-prop} this abbreviation  is well  defined, as it does not matter the order in which the closure is performed.
\begin{table}[h!]
\fbox{
\begin{minipage}{0.97\textwidth}
\[ \begin{array}{rl} 
\ruleOne{}{   \sum_{i\in I}(a_i,r_i).\Expr_i    \evolves{a_i,r_i} \Expr_i  }& 
\quad \quad \quad  \quad 
\ruleOne{\Expr \evolves{a,r} \Expr' }{ D \evolves{a,r} \Expr' }\quad
\mbox{ if } \,\, D \Defeq \Expr \\
\ruleOne{ (a,x_a).\Expr \evolves{a,x_a} \Expr}{ \clos{(a,x_a).\Expr}{a}{\lambda} \evolves{a,\lambda} \clos{\Expr}{a}{\lambda}  } &\quad \quad \quad

\ruleOne{(a,\mu).\Expr \evolves{a,\mu} \Expr}{ \clos{(a,\mu).\Expr}{a}{\lambda} \evolves{a,\mu} \clos{\Expr}{a}{\lambda}  }   \\
\end{array}\]
\end{minipage}
}
\caption{Transition semantics of simple processes}
\label{simple-proc-sem}
\end{table}
\begin{table}[h]
\fbox{
\begin{minipage}{0.97\textwidth}
\[\begin{array}{cc}
\ruleOne{\Aut_i \evolves{a,r} \Aut'_ i}{\coop{L}(\Aut_1,..,\Aut_i,.. \Aut_n)\evolves{a,r}\coop{L}(\Aut_1,..,\Aut'_i,.. \Aut_n) }& (a\notin L) \\
\ruleOne{\Aut_i \evolves{a,\lambda} \Aut'_i \qquad \qquad \Aut_k \evolves{a,x_a} \Aut'_k }{\coop{L}(\Aut_1,..,\Aut_i,..\Aut_k,.. \Aut_n)\evolves{a,\lambda}\coop{L}(\Aut_1,..,\Aut'_i,..,\Aut'_k,.. \Aut_n)  }& \,\, (a\in L,
								k\neq i) \\
\end{array}\]
\end{minipage}
}
\caption{Transition semantics of interacting processes}
\label{proc-sem}

\end{table}
Now we    define the  interaction among simple processes.
\begin{definition}\label{n-interactingautomata}    The set of { \em interacting processes},  $\Proctwo$,  is defined by the following syntax:
\[  \IAut ::=  \Aut ~\mid~\coop{L}(\Aut_1,\ldots ,\Aut_N) \]
 where  $L \subseteq  \Act$, $\Aut$ was defined in  Definition \ref{def:sp},  and    for all $i,j \leq N$ if  $i\neq j$ it holds that  $\actact(\Aut_i) \cap \actact(\Aut_j) \cap L= \pasact(\Aut_i) \cap \pasact(\Aut_j) \cap L= \emptyset$. 
\end{definition}
\begin{definition}\label{lts-coop} The   labelled transition system for the interacting processes  written $ \rightarrow:\Proctwo \times (\Act
\times \Real^+\cup \Var)
\times \Proctwo$  
 is
 the smallest multi-relation   that satisfies the rules in Table \ref{simple-proc-sem}.

We write  $ \IAut \evolves{a,r} \IAut' $ if $ ( \IAut ,(a,r),\IAut') \in \rightarrow$,  and  $\rightarrow^*$  for the transitive closure of $\rightarrow$.
\end{definition}

Crucial to this definition  is that the interaction  happens  pairwise.  For example, in queueing networks such as  the Jackson network \cite{Chao-Masa-Pinedo}, this captures the idea that customers hop from one node to one other.
\subsubsection{Comparison with PEPA}
The semantics  of the interaction  in  SSPA has been  inspired by PEPA \cite{Hillston94}, but it is not  identical. 
 PEPA's interaction operation  works on  {\em broadcasting} while in SSPA the  interaction/cooperation is strictly  pairwise. 

 In PEPA, cooperating processes over the same set of actions $L$ is  commutative and associative with the respect to a notion  of strong bisimulation. Strong bisimilarity  ($\cong$) identifies processes that can carry  out the same transitions with respect to the transition relation  defined in Definition \ref{lts-coop}. Therefore,   we assume that Definition  \ref{str-bis} is adapted to  the interacting processes. 

 In SSPA, the cooperating  operator is commutative, but {\em not} associative with  respect to strong bisimilarity, even  under the same set of cooperating actions.
 Commutativity says that  the order in which processes cooperate  does not matter.  If fact,  two processes that differ only for the order of simple processes in the cooperation are strongly bisimilar,  and, we will see, they have the same  product-form  solution. However, as far associativity goes, the reader can verify that $(\MA_1 \oplus_L \MA_2)\oplus_L \MA_3$   and   $\MA_1 \oplus_L (\MA_2\oplus_L \MA_3)$ have different   transitions i.e they are not strongly bisimilar. 
 To see this it suffices to take the following processes
 $(\MA_1= (a,\lambda).0$   and
  $\MA_2=\MA_3=(a,x_a).0$ with  $L=\myset{a}$ and verify that  $(\MA_2\oplus_L \MA_3)$  has no transition according to the semantics of SSPA. Therefore $\MA_1 \oplus_L (\MA_2\oplus_L \MA_3) \cong 0$ while $(\MA_1 \oplus_L \MA_2)\oplus_L \MA_3 \not \cong 0$, which  implies, differently  from PEPA semantics, that $(\MA_1 \oplus_L \MA_2)\oplus_L \MA_3 \not \cong\MA_1 \oplus_L (\MA_2\oplus_L \MA_3)$.  If we had used PEPA transition system we would have been able to  prove that the two processes are strongly  bisimilar.
\subsection{CTMC}

In this paper,  we deal only with  product-form  solutions for {\em time-homogenous } {\CTMC}s. 
For a  time-homogenous {\CTMC},  the  generator matrix $\generator$ contains all the  information to compute the  {\em transient} and {\em steady-state} probability  distribution.  
From the matrix $\generator$  it is possible to   describe the state space  of the {\CTMC} and vice versa.
Generally,  in  process algebra  such as PEPA,  the  language is  a means to  describe in a modular way the state space of the   underlying  {\CTMC}. The generator matrix  is then appropriately  recovered for computation  purposes.
If the interacting process $\IAut$ does not contain  passive transitions,  we can  recover the {\CTMC} by taking the set of all  derivatives of  $\IAut$ as the state space of the  {\CTMC }, and by generating the entries  of the generator matrix $\generator_ \IAut $ as the sum of all the real numbers of the transitions between two  derivatives.  Rates of self-loops  should be ignored, and the  diagonal of the generator matrix $\generator_ \IAut $  is constructed as usual to  ensure that the sum of the entries of the rows equals $0$. Even in the  presence of passive transitions in a process, the generator matrix  can  be recovered. However,  the matrix may not be used for computation purposes,  as it may  contain  the variables from the passive transitions. 
 For this reason, in what follows,  we describe how to  build the generator matrix, and we will  leave it to the  reader, or to the context in which it is used,  to  establish  if the generator can be used straightforwardly   for computation purposes, or  instantiation of variables is necessary.
 
 Given a process, we  define the {\em set of derivatives} as  the set of processes derived via the transitive closure of the labelled  transition  system. 

The set of derivatives of an  interactive process  $\IAut$ is defined as  $\SetDerivatives_\IAut = \myset{\IAut' : \IAut \evolves{a,r}^*\IAut' }$.
The  transition rate from the state of the chain $\IAut$ to $\IAut'$ is given by the sum of the rates of
all the labelled  transitions of the process  i.e.:
\[
q(\transition{\IAut}{\IAut'}{})=\sum_{\substack{(a,r): \IAut 
     \evolves{a,r} \IAut' \\ \IAut\neq \IAut'}} r. 
 \]
 
If  all transition rates $q(\transition{\IAut}{\IAut'}{}) \in \Real^+$ then $\generator_{\IAut}$   is the generator matrix of the  underlying  {\CTMC} of the interactive process.
If  for  some rates it holds that    $q(\transition{\IAut}{\IAut'}{}) \not \in \Real^+$ then  we must specify that  
the variables in the prefix $(a,x).\Expr $ are considered the same if they occur with  the same label.  This  observation  has a huge impact in correct derivation  of the generator matrix.  For all  intents and purposes,  variables are considered the same if they are associated with the same label. For  example we could write $ \Aut=(a,x).(b,\mu).\MA  +(a,y).(c,z).\MA$, however,  in the construction of the generator matrix,  either $x$ or $y$ will appear in the definition of the rate.  This concept has no meaning from the point of view of the process algebra, but it has huge impact in the building of the generator matrix, and in the computation of  probabilities.     
The  generator matrix of $\Aut$, written  $\generator_{\Aut}$, will  be
\[ \generator_{\Aut}=\left( \begin{array}{ccc}
-2x&x &x \\
 	\mu& -\mu & 0\\
	0&z& -z\end{array}
\right) \]
while the matrix 
\[\generator_{\Aut}=\left( \begin{array}{ccc}
  -(x+y)&x&y \\
 	\mu& -\mu & 0\\
	0&z& -z
\end{array}
\right) \]
is not what we intended.
We impose that $x=y$, since they appear with the label $a$  and therefore  we can treat $x$ as a variable, and apply standard numerical  operations. The semantics for the variables is such that $x,y\neq z$ as $z$ occurs with the label $c$, not $a$.
For this reason  the subscript of the label of the action in variables is  used in this paper.

For convenience we write $q(\transition{\IAut}{\IAut'}{a})=\sum_{r: \IAut \evolves{a,r} \IAut' } r$ 
for the transition rate with  respect to a label $a$.   We note that for the latter   definition  we also consider   rates from a state to itself i.e. 
$q(\transition{\IAut}{\IAut}{a})$. This  will  be useful later in Theorem \ref{grcat}.
We observe that $q(\transition{\IAut}{\IAut'}{}) = \sum_{\substack{a \in \Act  \\ \IAut\neq \IAut'}}q(\transition{\IAut}{\IAut'}{a})$. 

Given an  interacting process $\IAut$, we can  generate the state space of the {\CTMC} as $\SetDerivatives_\IAut$ and  the generator matrix $\generator_{\IAut}$, then 
 with an abuse of notation  we refer to  the properties
of the process meaning the properties of  the {\CTMC}.
Therefore, we can talk about a stationary or  steady-state distribution of the process, $\pi_\IAut$, meaning that its
{\CTMC} has a stationary or steady-state distribution  $\pi$. 
If $\generator_\IAut$ is the generator matrix of the underlying {\CTMC} of $\IAut$, 
  then we write  $\pi(\IAut)$  for {\em invariant measure} meaning the $\pi\generator_{\IAut}={\mathbf 0}$. In other words, we use in the notation 
  $\pi(\IAut)$ instead of  $\pi\generator_{\IAut}$.
  For  $\IAut' \in  \SetDerivatives_\IAut $ we write $\vec{\pi}_{\MA}(\IAut') $ meaning the value of the vector  $\vec{\pi}_\MA$ for the element $\IAut'$.
If $\sum_{\IAut' \in \SetDerivatives_\IAut} \pi(\IAut')=1$ then the {\CTMC} is ergodic and $\pi$ is the state-state distribution of $\IAut $
\cite{Chao-Masa-Pinedo}.

 \section{Product-form  solution}

We now present the main  theorem  of the paper regarding product-form solution for SSPA. The theorem asserts that for a given class of processes,  that satisfies  certain conditions on the rates and on the structure of state space,  the product-form solution  exists. 
We start with the definition  of structure of the processes.
\begin{definition}
 In a simple process $\MA =\sum_{i \in I}(a_i,r_i). \Expr_i$ the label $a$ is the {\em unique passive label}  if and only if $ \MA\equiv (a,x_a).  \Expr +S $ and $ \MA\equiv (a,x_a).  \Expr' +S' $  then $S'=S$ and $\Expr =\Expr '$ .
 \end{definition}
 We now define a  set of unique passive labels for a process.  Such  a set is not empty  if in all possible evolution of the process, one passive transition with a given  label is possible.
\begin{definition} The set of {\em unique passive labels}  in a simple process $\MA$, written $\uniqueLabels{\MA}$, is recursively defined as follows:
\begin{description}
\item[(Nil)] $\uniqueLabels{\Nil}= \Act$;
\item[(Def)]$\uniqueLabels{\Idf}=\uniqueLabels{\MA}$ \quad  if $\Idf\Defeq \MA$ 
\item[(Choice)] $\uniqueLabels{\sum_{i \in I}(a_i,r_i). \Expr_i} =  \left\{ \begin{array}{ll} 
								\emptyset	& \mbox{if there exist a passive}\\
								&\mbox{ label in}   \sum_{i \in I}(a_i,r_i) .\Expr_i \\
									&  \mbox{ which  is not unique}  \\ 
									&\\
								    (\cup_{i\in I}\myset{a_i})\cap_{i\in I} \uniqueLabels{ \Expr_i}     & \mbox{\quad if } a_i \mbox{is a unique } \\
								    &\quad \mbox{passive label  in} \\
								    &\quad   \sum_{i \in I}(a_i,r_i) .\Expr_i   \\
								\end{array} \right. $

\item[(Closure)] $\uniqueLabels{\clos{\Aut}{a}{\lambda}}= \uniqueLabels{\MA} \backslash\myset{a} $
\end{description}
\end{definition}
Such a definition is necessary as we need to use process that have one passive transition. This restriction could be relaxed, but it would involve a more  complicated  statement of Theorem  \ref{grcat}.  

We now provide the definition of well-formed simple processes, which are the building blocks for the correct definition of product form solutions. Well-formed processes are processes that will  generate no confusion in the construction of product form solutions.  Informally,  we can  think product-form as a way  of  decomposing the invariant measure of a {\CTMC}.  
Now, if  the {\CTMC} has been  built using simple processes and an empty cooperation set, then  each simple process is independent of the other, and trivially the invariant measure can be written as the product of the invariant measures of each  simple process. However, if a complex  {\CTMC} has been  built using simple processes and a {\em non-empty} cooperation, 
then the behaviour of  each simple process can  be influenced by the others in the cooperation.  If, however, in each simple process, the reversed rates of the  cooperating transitions are constant in each state,    then the  invariant measure  of a complex   {\CTMC} can  be written as the product of the invariant measures of each  simple process, in a similar fashion as if   they  were independent. 
To perform  all these calculations correctly, we need to make sure that no confusion arises when  writing the processes in SSPA, and therefore we need the notion of {\em well-formed processes}.
 
\begin{definition}[Well-formed processes]\label{well-formed}  A  simple  process  $\MA$ is {\em well-formed} if:
\begin{enumerate}
  \item  \label{fcond}
     $ \actypeact{\MA} \cap \actypepas{\MA}=\emptyset $ and 
    \item \label{scond} 
     if  $\actypepas{\MA}\neq \emptyset$  then  $\actypepas{\MA}=\uniqueLabels{\MA}$. 
  \end{enumerate}
  \end{definition}

From a syntactic point of view, we have done the work  for the following result for the product-form solution. The theorem considers only  complex {\CTMC}s composed by well-formed processes.   A further condition is added on the  outgoing  rates of the simple processes to guarantee that  on average, we can  quantify the dependency among the various processes.

\begin{theorem} \label{grcat}
Given an  interacting process $\IAut=\bigoplus_{L}(\MA_1,\MA_2,\ldots ,\MA_N) $ composed  by be  well-formed simple processes    $\MA_1,\MA_2,\ldots ,\MA_N$ that cooperate on a finite  set of actions 
$L=\{a_1,a_2, \ldots a_M\}$.

Assume that the state space of  $\SetDerivatives_\IAut=\SetDerivatives_{\MA_1}\times \SetDerivatives_{\MA_2} \times \dots \times\SetDerivatives_{\MA_N} $ is irreducible. 
   If for all labels in the cooperation set $L$    there exists a set of positive real numbers  
   $K=\{\kappa_1, \ldots, \kappa_M\}$ such that, for any simple process $\MA_i$, the following equations are satisfied
  \begin{equation}\label{eq:main.condition}
        \frac{ \sum_{\MA'\in \SetDerivatives_{\MA_i}} q(\transition{\MA'}{\MA}{a_r}) \pi_i(\MA')}{\pi_i(\MA)}=\kappa_r 
       \quad  \quad   \quad   \quad    \quad   \MA \in \SetDerivatives_{\MA_i}, \,\, a_r \in L\cap  \actact(\MA_i)
  \end{equation}
  where $\pi_i$ is  the invariant measure of      the closed process $\MA_i^c=\MA_{i[{\set  P}\cap L \leftarrow K] }$.
Then  the following statements hold:
   \begin{enumerate}
   \item The invariant measure of the process $\bigoplus_{L}(\MA_1,\MA_2, \ldots, \MA_N)$  has the  product-form:
  \begin{eqnarray}
  \pi(\bigoplus_{L}(\MA_1,\MA_2, \ldots, \MA_N)) =  \pi_1(\MA^c_1)\otimes \pi_2(\MA^c_2) \otimes\ldots\otimes \pi_2(\MA^c_N) \label{eq:product.form}   \end{eqnarray}
  where $\otimes $ is the  Kronecker product \footnote{If $\pi_1,\pi_2$ are  two vectors, $\pi_1\in \Real^{1\times n}$ and $\pi_2\in \Real^{1\times m}$ then he  Kronecker product  is   $\pi_1\otimes \pi_2 =(p_1\pi_2,p_2\pi_2,\dots p_n\pi_2) \in\Real^{1\times nm} $. }.
  \item If $\sum_{\MA \in \statespace_{\MA_i}} \pi_i(\MA) = 1$  $(i \in [1,\ldots,N])$  then  $\pi$ is 
    the steady-state probability distribution of $\bigoplus_{L}(\MA_1,\MA_2, \ldots, \MA_N)$.
  \end{enumerate}

    \end{theorem}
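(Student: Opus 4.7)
My plan is to directly verify that the product measure $\pi(s) := \pi_1(\MA_1') \otimes \cdots \otimes \pi_N(\MA_N')$, indexed over $s = (\MA_1', \ldots, \MA_N') \in \SetDerivatives_\IAut$, satisfies $\pi \generator_\IAut = \mathbf{0}$, establishing item~1. For item~2, the Kronecker product of probability vectors is itself a probability vector, so $\sum_s \pi(s) = \prod_i \sum_{\MA_i'} \pi_i(\MA_i') = 1$; combined with irreducibility of $\SetDerivatives_\IAut$ this identifies $\pi$ as the unique steady-state distribution.

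The first step is to classify the transitions out of a state $s$ by the rules of Table~\ref{proc-sem}: each outgoing transition is either a unilateral move of some component $i$ labelled $a \notin L$, or a synchronised pairwise move labelled $a_r \in L$ involving exactly the unique active component $i(a_r)$ and the unique passive component $k(a_r)$, whose uniqueness comes from the disjointness clause in Definition~\ref{n-interactingautomata}. Well-formedness (Definition~\ref{well-formed}) then gives the crucial structural fact $\actypepas{\MA_k} = \uniqueLabels{\MA_k}$: every passive cooperation label of $\MA_k$ is enabled, uniquely so, from every reachable state, so the passive preimage is always a single state $\MA_k^{'-}$. Consequently the incoming cooperation flux to $s$ via $a_r$ reduces, by hypothesis~(\ref{eq:main.condition}) applied on the active side, to $\kappa_r \, \pi_{i(a_r)}(\MA_{i(a_r)}') \, \pi_{k(a_r)}(\MA_{k(a_r)}^{'-}) \, \prod_{j \ne i(a_r),k(a_r)} \pi_j(\MA_j')$, while non-cooperation contributions factorise cleanly across the untouched components.

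The second step exploits the balance equation of each closed process $\MA_i^c = \MA_{i[\set P \cap L \leftarrow K]}$ at state $\MA_i'$, which holds by definition of $\pi_i$. Multiplying each such balance by $\prod_{j \ne i} \pi_j(\MA_j')$ and summing over $i \in \{1, \ldots, N\}$ produces an identity that, I claim, coincides with the cooperation's global balance at $s$. The rearrangement uses that closure leaves active transitions unchanged and converts passive cooperation transitions into active ones of rate $\kappa_r$, and that hypothesis~(\ref{eq:main.condition}) identifies the active-$a_r$ incoming flux of each $\MA_i^c$ with $\kappa_r \pi_i(\MA_i')$; the $\kappa_r$ contributions from the active side of $\MA_{i(a_r)}^c$ then cancel the $\kappa_r$ contributions from the passive side of $\MA_{k(a_r)}^c$. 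The main obstacle is the careful bookkeeping of this cancellation across the $N$ closed-component balances: for each $a_r \in L$ one must track which component serves as active source versus passive sink, how closure rescales each rate to $\kappa_r$, and how uniform passive enablement (from well-formedness) eliminates the indicator terms that would otherwise arise at states where a passive cooperation transition was missing. Once this alignment is made explicit the verification $\pi \generator_\IAut = \mathbf{0}$ reduces to routine algebra.
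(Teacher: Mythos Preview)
Your approach is essentially the paper's: both verify the global balance equations of $\IAut$ at a generic state by combining the balance equations of the closed components $\MA_i^c$, using hypothesis~(\ref{eq:main.condition}) to replace the active-$a_r$ inflow by $\kappa_r\pi_i$ and well-formedness to control the passive side; the paper does this for $N=2$ by subtracting the $\MA_1^c$-balance from the joint balance to recover the $\MA_2^c$-balance, while you do the algebraically equivalent summation form for general $N$ with the $\kappa_r$-cancellation you describe. One caveat: well-formedness (Definition~\ref{well-formed}) guarantees a unique passive \emph{successor} from every state, not a unique passive \emph{preimage}, so your reduction of the incoming cooperation flux to a single term $\kappa_r\,\pi_i(\MA_i')\,\pi_k(\MA_k'^{-})\prod_{j\neq i,k}\pi_j(\MA_j')$ is an overstatement; however this does not break the argument, since the sum over passive predecessors of $\MA_k'$ factors identically on both sides (it multiplies $\kappa_r\prod_{j\neq k}\pi_j(\MA_j')$ in the closed-$k$ inflow and in the global inflow) and thus drops out of the comparison.
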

    \begin{proof} 
     For (1) we first show that for all states $(\MA_1,\MA_2, \ldots, \MA_N) \in \SetDerivatives_{\MA_1}\times \SetDerivatives_{\MA_2} \times \dots \times\SetDerivatives_{\MA_N}$ we can  derive  $\pi( \MA_1,\MA_2, \ldots, \MA_N) =\prod^N_{i=1} \pi_i(\MA_i)$ for $ \MA_i\in\SetDerivatives_{\MA^c_i}$.
    Since we are considering the whole state space, the result  $  \pi(\bigoplus_{L}(\MA_1,\MA_2, \ldots, \MA_N)) =  \pi_1(\MA^c_1)\otimes \pi_2(\MA^c_2) \otimes\ldots\otimes \pi_2(\MA^c_N)$ follows.
   We show now, for $N=2$ that $\pi( \MA_1,\MA_2)= \pi_1(\MA_1) \pi_2(\MA_2)$ when $ \MA_1 \oplus_{\myset{a,c} \MA_2}$. Generalisation to $N$ is straightforward.
   We observe, that with an  abuse of notation we write $\MA_1$ to indicate the simple process in the cooperation, but also  the process that forms the state space of  $\stsp{ \MA_1}$. The context distinguishes between these two mathematical  objects. 
   
    The global  balance equations  for the process $\MA_1$  or  $\MA^c_1$ are the following, assuming that $\actypeact{\MA_1}\cap L = \myset{a}$ and $\actypeact{\MA_2}\cap L = \myset{c}$
    \begin{multline}
\pi_{\MA^c_1}(\MA_1)\big(  \sum_{\MA_1'\in \stsp{ \MA^c_1}} \rateact{\MA_1}{\MA_1'}{a}{ \MA^c_1} + \underbrace{ \rateact{\MA_1}{\MA_1'}{c}{ \MA^c_1}  }_{x_c} + \sum_{\substack{\MA_1'\in \stsp{ \MA^c_1} \\ b\neq a,c}} \rateact{\MA_1}{\MA_1'}{b}{ \MA^c_1}\big) 
=  \\ \sum_{\substack{\MA_1' \in \stsp{ \MA^c_1} }}\rateact{\MA_1'}{\MA_1}{a}{ \MA^c_1}\pi_{ \MA^c_1}(\MA_1') +\sum_{\substack{\MA_1' \in \stsp{ \MA^c_1} }}\underbrace{\rateact{\MA_1'}{\MA_1}{c}{ \MA^c_1}}_{x_c}\pi_{ \MA^c_1}(\MA_1')  +\\
 \sum_{\substack{\MA_1'\in \stsp{ \MA^c_1}\\b\neq a,c}}  \rateact{\MA_1'}{\MA_1}{b}{ \MA^c_1}\pi_{ \MA^c_1}(\MA_1').  \label{gbeP}
\end{multline}
We have underlined the transition  rates what would have a variable in  $\MA_1$ but a real  number in  $\MA^c_1$. By  definition of well-formed simple process, we know that there is only  one instance of $\rateact{\MA_1}{\MA_1'}{c}{ \MA^c_1}$.
 
For $\MA_2$ or $\MA^c_2$ the global  balance equations would be similar by  reverting the role of the rates of  the actions $a$ and $c$.

We write now the global balance equations for the global state $(\MA_1,\MA_2)$ as follows:
\begin{multline}
\pi\big((\MA_1,\MA_2)\big)\Big(\sum_{\substack{\MA_1'\in \stsp{ \MA_1}\\ b\neq a,c}}\rateact{(\MA_1,\MA_2)}{(\MA'_1,\MA_2)}{b}{}+\sum_{\substack{\MA_2'\in \stsp{\MA_2}\\ b\neq a,c}} \rateact{(\MA_1,\MA_2)}{(\MA_1,\MA_2')}{b}{}  + \\ 
 \sum_{(\MA'_1,\MA_2')\in   \stsp{ \MA_1} \times \stsp{\MA_2}}\rateact{(\MA_1,\MA_2)}{(\MA'_1,\MA_2')}{c}{}+ \sum_{ (\MA'_1,\MA_2')\in   \stsp{ \MA_1} \times \stsp{\MA_2}}\rateact{(\MA_1,\MA_2)}{(\MA'_1,\MA_2')}{a}{} \Big) \\
=  \sum_{\substack{\MA_1'\in \stsp{\MA_1}\\ b \neq a,c}}  \rateact{(\MA'_1,\MA_2)}{(\MA_1,\MA_2)}{b}{}\pi\big((\MA'_1,\MA_2)\big) +  \\ \sum_{\substack{\MA_2'\in \stsp{\MA_2}\\ b\neq a,c}}  \rateact{(\MA_1,\MA_2')}{(\MA_1,\MA_2)}{b}{}\pi\big((\MA_1,\MA_2')\big)  \, + \\
\sum_{(\MA'_1,\MA_2')\in   \stsp{ \MA_1} \times \stsp{\MA_2}}\rateact{(\MA'_1,\MA_2')}{(\MA_1,\MA_2)}{a}{}\pi\big((\MA'_1,\MA_2') +  \nonumber
\end{multline} 
\begin{multline} 
\sum_{(\MA'_1,\MA_2')\in   \stsp{ \MA_1} \times \stsp{\MA_2}}\rateact{(\MA'_1,\MA_2')}{(\MA_1,\MA_2)}{c}{}\pi\big((\MA'_1,\MA_2')\big). \nonumber
\end{multline} 

 We assume that we can write the joint invariant measure in  product-form, dividing by  $\pi_{\MA_1}(\MA_1), \pi_{\MA_2}(\MA_2)$  and  writing down  the contribution  of the rates of  each simple process for the labels $b \notin L$ we have:

\begin{multline}
\sum_{\substack{\MA_1'\in \stsp{ \MA_1}\\ b\neq a,c}}\rateact{\MA_1}{\MA'_1}{b}{}+ \sum_{(\MA'_1,\MA_2')\in   \stsp{ \MA_1} \times \stsp{\MA_2}}\rateact{(\MA_1,\MA_2)}{(\MA'_1,\MA_2')}{c}{}  + \\
\sum_{\substack{\MA_2'\in \stsp{\MA_2}\\ b\neq a,c}} \rateact{\MA_2}{\MA_2'}{b}{} + \sum_{(\MA'_1,\MA_2')\in   \stsp{ \MA_1} \times \stsp{\MA_2}}\rateact{(\MA_1,\MA_2)}{(\MA'_1,\MA_2')}{a}{}  \\
=  \sum_{\substack{\MA_1'\in \stsp{\MA_1}\\ b \neq a}}  \rateact{\MA'_1}{\MA_1}{b}{}\frac{\pi_{\MA_1}(\MA_1')}{ \pi_{\MA_1}(\MA_1)} + 
  \sum_{\substack{\MA_2'\in \stsp{\MA_2}\\ b\neq a,c}}  \rateact{\MA_2'}{\MA_2}{b}{}\frac{\pi_{\MA_2}(\MA_2')}{\pi_{\MA_2}(\MA_2)}  \, + \\
\sum_{(\MA'_1,\MA_2')\in   \stsp{ \MA_1} \times \stsp{\MA_2}}\rateact{(\MA'_1,\MA_2')}{(\MA_1,\MA_2)}{a}{}\frac{\pi_{\MA_1}(\MA_1')\pi_{\MA_2}(\MA_2')}{\pi_{\MA_1}(\MA_1)\pi_{\MA_2}(\MA_2) } 
\, + \\
 \sum_{(\MA'_1,\MA_2')\in   \stsp{ \MA_1} \times \stsp{\MA_2}}\rateact{(\MA'_1,\MA_2')}{(\MA_1,\MA_2)}{c}{}\frac{\pi_{\MA_1}(\MA_1')\pi_{\MA_2}(\MA_2')}{\pi_{\MA_1}(\MA_1)\pi_{\MA_2}(\MA_2) }.
 \nonumber 
\end{multline} 
We consider the  rates in the terms with joint  states. We observe that  since we impose that the simple processes are well  formed, this means that  there is only  one passive action in  each process:  in $\MA_1$ the passive action will  be labelled $c$ while in  $\MA_2$ will  be labelled $a$. The number of transitions  in the joint state space $\stsp{ \MA_1} \times \stsp{\MA_2} $ will  the same number as the active transitions. Therefore we can  rewrite the global  balance equation as follows:

\begin{multline}
\sum_{\substack{\MA_1'\in \stsp{ \MA_1}\\ b\neq a,c}}\rateact{\MA_1}{\MA'_1}{b}{}+\sum_{\substack{\MA_2'\in \stsp{\MA_2}\\ b\neq a,c}} \rateact{\MA_2}{\MA_2'}{b}{}  + \sum_{\MA_2'\in  \stsp{\MA_2}}\rateact{\MA_2}{\MA_2'}{c}{} +  \\ \sum_{\MA'_1\in   \stsp{ \MA_1}}\rateact{\MA_1}{\MA'_1}{a}{}  
=  \sum_{\substack{\MA_1'\in \stsp{\MA_1}\\ b \neq a}}  \rateact{\MA'_1}{\MA_1}{b}{}\frac{\pi_{\MA_1}(\MA_1')}{ \pi_{\MA_1}(\MA_1)} + \\ 
  \sum_{\substack{\MA_2'\in \stsp{\MA_2}\\ b\neq a,c}}  \rateact{\MA_2'}{\MA_2}{b}{}\frac{\pi_{\MA_2}(\MA_2')}{\pi_{\MA_2}(\MA_2)}  \, + 
 \sum_{\MA_2'\in  \stsp{\MA_2}}\sum_{\MA'_1\in   \stsp{ \MA_1} }\rateact{\MA'_1}{\MA_1}{a}{}\frac{\pi_{\MA_1}(\MA_1')\pi_{\MA_2}(\MA_2')}{\pi_{\MA_1}(\MA_1)\pi_{\MA_2}(\MA_2) } \\
+ \sum_{\MA_1'\in  \stsp{\MA_1}} \sum_{\MA_2'\in  \stsp{\MA_2}}\rateact{\MA_2'}{\MA_2}{c}{}\frac{\pi_{\MA_1}(\MA_1')\pi_{\MA_2}(\MA_2')}{\pi_{\MA_1}(\MA_1)\pi_{\MA_2}(\MA_2) }.
 \nonumber 
\end{multline} 
We can  now rewrite the  global balance equations in  Equation \ref{gbeP} in the following convenient way:

  \begin{multline}
  \sum_{\MA_1'\in \stsp{ \MA^c_1}} \rateact{\MA_1}{\MA_1'}{a}{ \MA^c_1} +  \kappa_c  + \sum_{\substack{\MA_1'\in \stsp{ \MA^c_1} \\ b\neq a,c}} \rateact{\MA_1}{\MA_1'}{b}{ \MA^c_1}
= \\ \kappa_a+   \sum_{\substack{\MA_1' \in \stsp{ \MA^c_1} }}\kappa_c \frac{\pi_{ \MA^c_1}(\MA_1') }{\pi_{\MA^c_1}(\MA_1)} +
 \sum_{\substack{\MA_1'\in \stsp{ \MA^c_1}\\b\neq a,c}}  \rateact{\MA_1'}{\MA_1}{b}{ \MA^c_1}\frac{\pi_{ \MA^c_1}(\MA_1') }{\pi_{\MA^c_1}(\MA_1)}\big) \nonumber  
 \end{multline}

By subtracting each  term  side of the last two  equation we obtain:
\begin{multline}
\sum_{\substack{\MA_2'\in \stsp{\MA_2}\\ b\neq a,c}} \rateact{\MA_2}{\MA_2'}{b}{}  + \sum_{\MA_2'\in  \stsp{\MA_2}}\rateact{\MA_2}{\MA_2'}{c}{}   -\kappa_c 
=  \\ 
  \sum_{\substack{\MA_2'\in \stsp{\MA_2}\\ b\neq a,c}}  \rateact{\MA_2'}{\MA_2}{b}{}\frac{\pi_{\MA_2}(\MA_2')}{\pi_{\MA_2}(\MA_2)}  \, + 
 \sum_{\MA_2'\in  \stsp{\MA_2}}\kappa_a\frac{\pi_{\MA_2}(\MA_2')}{\pi_{\MA_2}(\MA_2) }- \kappa_a
 \nonumber 
\end{multline} 
The latter equation can  be rewritten  to see that the global balance equation of $\MA_2$ by expanding the Definition  of  $ \kappa_a, \kappa_c$ as in Condition  \ref{eq:main.condition} of Theorem \ref{grcat}.
    \end{proof}
The proof is very  elegant, not because it uses global balance equations, but because it solidly relies on the semantics of the cooperation. Such semantics  establishes  the contribution of each component  to transform the global balance equations  of the joint processes into the global balance equations of each simple process.
Further considerations  on the cooperation operator will lead to conclude that  the semantics given  in this work  is the right one, as it allows the correct substitution of  the rates in condition \ref{eq:main.condition} of Theorem \ref{grcat} in the passive transitions.  As Hillston  pointed out in \cite{Hillston94}, passive transitions lack of information about the speed of the transition. Such  information is  given  by the cooperation with the active  partner. In  this work we embrace this view fully, but we also find out the right rates (the ones given  by condition \ref{eq:main.condition} of Theorem \ref{grcat})   for the passive transitions to proceed in isolation.  We could have chosen an arbitrary rate to be substituted into  the passive transition. This would have made no sense at all.  The product form  solution  relates  the rates of the joint process with the rates of the single components. We can see  why it is important that cooperation is not associative, and the broadcasting semantics of PEPA would not work  here. Consider three well-formed simple processes $\MA_1,\MA_2,\MA_3$ specified as follows $\MA_1=(a,\lambda).\MA'_1, \MA_2= (a,x_a).\MA'_2, \MA_3= (a,x_a).\MA'_3$ such that condition \ref{eq:main.condition} of Theorem \ref{grcat} is satisfied for $\MA_1$.  Assume that we have    PEPA  semantics and $ \MA_2 \oplus{\myset{a}}\MA_3  \evolves{a,x_a}  \MA'_2 \oplus{\myset{a}}\MA'_3 $ and $ \MA_1 \oplus{\myset{a}} (\MA_2 \oplus{\myset{a}}\MA_3)  \evolves{a,\lambda} \MA_1 \oplus{\myset{a}} (\MA'_2 \oplus{\myset{a}}\MA'_3)$. Now, to identify the product-form solution we would need  to substitute in  both processes  the rate  $\kappa_a$
 $ \MA_{2 [a \leftarrow \kappa_a]},\MA_{3 [a \leftarrow \kappa_a]}$.  The product from  would not work at all. The reader can verify this by inspecting the proof of theorem  \ref{grcat}.   Now consider the  equivalent process $ (\MA_1 \oplus{\myset{a}}\MA_2)\oplus{a}\MA_3 \evolves{a,\lambda}  (\MA'_1 \oplus{\myset{a}}\MA'_2)\oplus{\myset{a}}\MA'_3$  such that $ (\MA_1 \oplus{\myset{a}}\MA_2) \evolves{a,\lambda}  (\MA'_1 \oplus{\myset{a}}\MA'_2)$. We would obtain  a series of substitutions   $\MA_{2 [a \leftarrow \kappa_a]}$ and, if $(\MA_1 \oplus{\myset{a}}\MA_2)$ satisfy the conditions in  theorem \ref{grcat} for $a$, then  we would have    $\MA_{3 [a \leftarrow  \kappa^*_a]}$, where $\kappa^*_a= \sum_{ (\MA^*_1,\MA^*_2) } \lambda\frac{\pi(\MA^*_1,\MA^*_2) }{\pi(\MA''_1,\MA''_2) }$ Clearly this would lead to different product-form solution from $ \MA_1 \oplus{\myset{a}} (\MA_2 \oplus{\myset{a}}\MA_3)$.  
 In conclusion,  differently from PEPA semantics, we do  not wish to have associativity as the  rates used for  the closure  of each simple process matters. Such  rates depend on  how we group  simple processes together.
The semantics of the cooperation is the exactly was is needed to  correctly interpret  product-form solutions.
 In  this work  we are not concerned about  numerical  or analytical methods for solution  equations in the form  of \ref{eq:main.condition}. Such  methods can  be found in \cite{Chao-Masa-Pinedo}.
\section{Product-form  solutions for biological systems}
As stated in the introduction, product-form solutions have been  mostly used in queueing theory. There has been a recent interest in product-form solution for  biological system \cite{Mairesse09,Anderson10}. In particular in \cite{Mairesse09,Anderson10} consider only chemical reactions, while  we show here a variation of the product-form solutions for more complex systems.  

Assume we have a cancerous  cell, that grows, in a limited way  provided that there is enough energy. In absence of energy  the cell  could die, with a certain probability $p$. 
We model the cell  as follows:
\[\begin{array}{rcl}
\newExpr{C}_0&=&(a,x_a).\newExpr{C}_1 \\
\newExpr{C}_1&=&(a,x_a).\newExpr{C}_2+ (c,\gamma_1).\newExpr{C}_0 + (c,\kappa_c).\newExpr{C}_1\\ 
\vdots& &  \vdots \\
\newExpr{C}_N&=&(a,x_a).\newExpr{C}_N+  (c,\gamma_N).\newExpr{C}_0 + (c,\kappa_c).\newExpr{C}_N.\\ 
\end{array}\]
The transitions labelled $a$ stand for the energy that  will allow the  cell  to  grow. As energy  is provided by the environment, we model it as a passive transition.  We note that the cell has a finite growth: even  in the presence of an  infinite amount of energy, the cell will stop growing.
The transitions labelled $c$ stands for  cancerous events: they could inhibit  growth and kill the cell.
  
We model the energy as a switch, either there is energy  for the cell to  grow, or there is no energy:

\[\begin{array}{rcl}
\newExpr{E}_0&=&(a,\lambda).\newExpr{E}_1+ (a,\delta).\newExpr{E}_0\\
\newExpr{E}_1&=&(d,\delta).\newExpr{E}_0.
\end{array}\]
The rate $\delta$ represents the  speed at which the environment supplies energy. 
Similarly, we model the trigger for cancer which  reduces the  size of the cell as a switch:
\[\begin{array}{rcl}
\newExpr{T}_0&=&(c,x_c).\newExpr{T}_1 + (c,x_c).\newExpr{T}_0\\
\newExpr{T}_1&=&(e,\nu).\newExpr{T}_0.
\end{array}\]
f
The system is the following: ${\oplus}_{\myset{a,c}}(\newExpr{E}_0 ,\newExpr{ C}_0 , \newExpr{T}_0 )$, and the  steady state probabilities are $\pi({\oplus}_{a,c}(\newExpr{E}_0 ,\newExpr{C}_0 , \newExpr{T}_0 ) )=\pi_1(\newExpr{E}_0) \otimes \pi_2(\newExpr{ C}_{0 [a \leftarrow \delta]})\otimes \pi_3( \newExpr{T}_{0 [c \leftarrow \kappa_c]}) $ where
 $\kappa_c=\frac{ \gamma_3 \pi_1(\newExpr{C}_3) +  \gamma_2 \pi_1(\newExpr{C}_2)+ \gamma_1 \pi_1(\newExpr{ C}_1)}{\pi_1(\newExpr{ C}_0)} $. 
 We observe that  $\newExpr{E}_0 ,\newExpr{ C}_0 , \newExpr{T}_0 $ are well formed process and that the conditions of Theorem \ref{grcat} are satisfied.

The transition graphs  of  the state-space processes can  be found in Figure \ref{biological-systema}.

\begin{figure}[h]
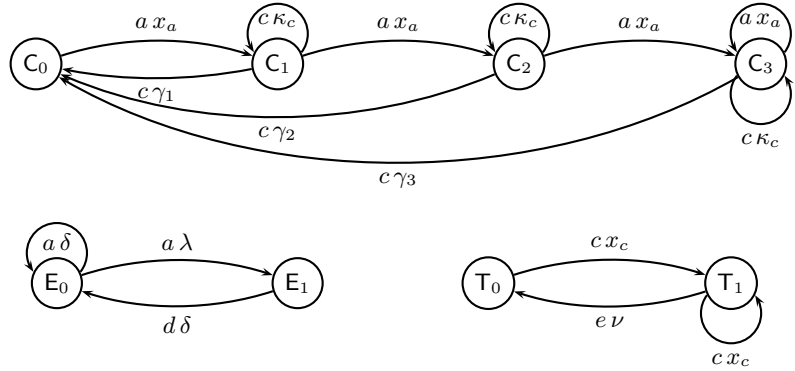

\centering
$\psmatrix[mnode=circle,colsep=2.5]  \newExpr{C}_0& \newExpr{C}_1 & \newExpr{C}_2&\newExpr{C}_3 
\endpsmatrix
\psset{shortput=nab,arrows=->,labelsep=3pt} \small
 \ncarc[arcangle=18]{1,1}{1,2}^{a\, x_a}
  \ncarc[arcangle=18]{1,2}{1,3}^{a\, x_a}
  \ncarc[arcangle=18]{1,3}{1,4}^{a\, x_a}
  \psset{arrows=->,labelsep=3pt} 
   \nccircle[arcangle=10]{1, 4}{0.4}^{a\, x_a}
   \nccircle[arcangle=10]{1, 4}{-0.4}^{c\, \kappa_c}
   \nccircle[arcangle=10]{1, 3}{0.4}^{c\, \kappa_c}
      \nccircle[arcangle=10]{1, 2}{0.4}^{c\, \kappa_c}
\ncarc[arcangle=30]{1,4}{1,1}^{c\, \gamma_3}
\ncarc[arcangle=23]{1,3}{1,1}^{c\, \gamma_2}
\ncarc[arcangle=11]{1,2}{1,1}^{c\, \gamma_1}
$

\vspace*{2cm}
\subfigure{
$ \begin{psmatrix}[mnode=circle,colsep=2.5]
\newExpr{E}_0& \newExpr{E}_1 \end{psmatrix}
 \psset{shortput=nab,arrows=->,labelsep=3pt} \small
 \ncarc[arcangle=18]{1,1}{1,2}^{a\, \lambda}
  \ncarc[arcangle=18]{1,2}{1,1}^{d\, \delta}
   \nccircle[arcangle=10]{1, 1}{0.4}^{a\, \delta}$
   }\hspace*{1.5cm}
   \subfigure{ 
  $ \begin{psmatrix}[mnode=circle,colsep=2.5]
\newExpr{T}_0& \newExpr{T}_1 \end{psmatrix}
 \psset{shortput=nab,arrows=->,labelsep=3pt} \small
 \ncarc[arcangle=18]{1,1}{1,2}^{c\, x_c}
  \ncarc[arcangle=18]{1,2}{1,1}^{e\, \nu}
   \nccircle[arcangle=10]{1, 2}{-0.4}^{c\, x_c}$
   }
\label{biological-systema}
\vspace*{1cm}
\caption{ Transition graphs  of the  state-space of the processes  $\newExpr{E}_0 ,\newExpr{ C}_0 , \newExpr{T}_0 $ }
\end{figure}
 \section{Conclusion}
 The main  interest for product-form solutions arises when  a  {\CTMC}   contains a rather large state space, and the analytical  computation of the steady state probability/invariant measure    can be  computationally   prohibitive.
In this paper, we have analysed  the cooperation  operator for a simple process algebra and its relationship with  product-form solution. We have clarified the semantics of such  operator,  and we have  shown that such semantics  is necessary   to derive  correctly product-form solutions.

\section*{Acknowledgments}
I gratefully acknowledge Jane Hillston for useful discussions  on  product-form  solutions and operational  semantics. Jane pointed out some mistakes and typos in  earlier version of the proof of theorem \ref{grcat}.
A lot of   generous support and encouragement   was given by Gianfranco Balbo. I had very interesting  discussions  with Andrea Marin,  who also  pointed out to the  work  done by Mairesse on biological  systems. My  interest in  product-form solution has arisen  during research  work I conducted with  Peter Harrison. His  unique way  of working has been  great inspiration for me. 
 
\bibliographystyle{plain}
 
 \bibliography{newbibliography}

\end{document}